\documentclass[twocolumn,amsmath,amssymb,superscriptaddress]{revtex4}
\usepackage{dcolumn}% Align table columns on decimal point
\usepackage{bm}% bold math and blackboard
\usepackage{amsthm}
%[section]
%[section]
\newtheorem{example}{Example}%[section]
%[section]
%[section]
\newtheorem{proposition}{Proposition}%[section]
%[section]
%[section]
%\usepackage{ifpdf}
\usepackage[pdftex]{graphicx}
\graphicspath{{figures/pdf/}} \DeclareGraphicsExtensions{.pdf}
%%% auxilliary definitions
\def\vec#1{{\bm #1}}

\def\ket#1{| #1 \rangle}
\def\bra#1{\langle #1 |}
\def\ip#1#2{\langle #1 | #2 \rangle}

\def\norm#1{\| #1 \|}

\def\RR{\mathbb{R}}
%% math operators
\def\diag{\operatorname{diag}}

\def\dim{\operatorname{dim}}

\def\Tr{\operatorname{Tr}}

%\mathcal symbols
\def\A{\mathop{\bf A}\nolimits}
\def\Lb{\mathop{\bf L}\nolimits}
\def\Db{\mathop{\bf D}\nolimits}

\def\D{\mathcal{D}}
\def\E{\mathcal{E}}

\def\H{\mathcal{H}}

\def\L{\mathcal{L}}

%% blackboard symbols
\def\ONE{\mathbb{I}}

\def\SU{\mathbb{SU}}

%% mathfrak symbols

\def\BB{\mathfrak{B}}
\def\DD{\mathfrak{D}}

%% other shortcuts

\def\HS{\mathop{\rm HS}}
\def\TR{\mathop{\rm TR}}
\def\ss{\rm ss}
\begin{document}
\title{On the contractivity of the Hilbert-Schmidt distance under open system
dynamics}
\author{Xiaoting Wang}\email{xw233@cam.ac.uk}
\affiliation{Department of Applied Maths and Theoretical Physics,
             University of Cambridge, Wilberforce Road, Cambridge, CB3 0WA, UK}
\author{S.~G.~Schirmer}\email{sgs29@cam.ac.uk}
\affiliation{Department of Applied Maths and Theoretical Physics,
             University of Cambridge, Wilberforce Road, Cambridge, CB3 0WA, UK}
\date{\today}

\begin{abstract}
It is shown that the Hilbert-Schmidt (HS) norm and distance, unlike the
trace norm and distance, are generally not contractive for open quantum
systems under Lindblad dynamics.  Necessary and sufficient conditions
for contractivity of the HS norm and distance are given, and explicit
criteria in terms of the Lindblad operators are derived.  It is also
shown that the requirements for contractivity of the HS distance are
strictly weaker than those for the HS norm, although simulations suggest
that non-contractivity is the typical case, i.e., that systems for which
the HS distance between quantum states is monotonically decreasing are
exceptional for $N>2$, in contrast to the case $N=2$ where it is always
monotonically decreasing.
\end{abstract}

\pacs{Add PACS} %% add suitable pacs numbers

\maketitle

\section{Introduction}

The trace norm and induced trace distance play an important role in
quantum information theory.  One of its most important features,
widely used in research papers and popular
textbooks~\cite{Nielson,Breuer}, is its contractivity for
trace-preserving quantum evolution, first proved by
Ruskai~\cite{Ruskai}.  It was once conjectured that contractivity
would extend to other more intuitive norms such that the
Hilbert-Schmidt (HS) norm~\cite{Vedral,Witte}, but flaws in the
original argument were soon discovered and an explicit
counter-example of a trace-preserving map with non-contractive HS
norm was provided by Ozawa~\cite{Ozawa}. The more general question
of when a positive trace-preserving [PTP] map between matrix spaces
is contractive with respect to the $p$-norm for $p>1$ was recently
considered in~\cite{Garcia}, where it was shown that PTP maps are
contractive for $p>1$ in general if only if they are unital.  This
does not answer the question, however, when a PTP map defined on a
subset of a matrix space is contractive with respect to a particular
$p$-norm such as the HS norm, and contractivity depends on the
subspace.  In particular, the HS norm restricted to the trace-zero
hyperplane in the space of Hermitian matrices may be contractive
while it is not on the whole matrix space.

Contractivity of the distance between quantum states in the sense
that the distance between two quantum states is monotonically
decreasing, is of particular interest in the area of open system
dynamics.  We consider under what conditions the evolution of an
open quantum system subject to semigroup dynamics governed by a
Lindblad master equation is contractive with respect to the HS
distance.  It can be shown that for $N=2$ both the HS norm and
distance are always contractive, while for $N>2$ both are
contractive only for small subsets of open systems, with the set of
open systems for which the HS norm is contractive being strictly
smaller than the set of open systems for which the HS distance is
contractive. Aside from the relevance to quantum information, e.g.,
in the construction of entanglement measures, one important
implication of the non-contractivity of the HS norm is in quantum
control, where the HS distance between quantum trajectories is a
common choice of a Lyapunov function, e.g., for local optimal
control~\cite{Mirrahimi2005,altafini,Wang-Schirmer,Ticozzi}.  As
monotonicity is a prerequisite for a Lyapunov function, it means
that the HS distance is generally not a suitable candidate for a
Lyapunov function for open systems.

\section{Hilbert Space Norms and Contractivity}

Let $\H$ be a Hilbert space with $\dim\H=N$ and let $\BB[\H]$ be the
bounded operators on $\H$, and $\DD[\H]$ be the trace-$1$ positive
operators on $\H$.  It is not difficult to check that
\begin{equation}
  \norm{A}_{\TR} = \frac{1}{2} \Tr \sqrt{A^\dag A}
\end{equation}
is well-defined for any $A\in\BB[\H]$ as $A^\dag A$ is positive, and
satisfies the axioms of a norm, and induces a metric, the trace
distance, in the usual way $d_{\TR}(A,B)=\norm{A-B}_{\TR}$.

Among the nice features of the trace norm is contractivity under
trace-preserving maps~\cite{Ruskai}, i.e., given two density
operators $\rho_1,\rho_2\in\DD[\H]$ and a map $\E:\BB[\H]\to\BB[\H]$
with $\Tr(\E(A))=\Tr(A)$, then
\begin{equation}
   d_{\TR}(\E(\rho_1),\E(\rho_2)) \le d_{\TR}(\rho_1,\rho_2).
\end{equation}
This implies in particular that any (super)operator that maps
density matrices to density matrices cannot increase the trace
distance between two states.  Furthermore, if $\E_t$ is a quantum
dynamical semi-group then $d_{\TR}(\E_t(\rho_1),\E_t(\rho_2))$ is
monotonically decreasing, although monotonicity is not always
strict. Indeed, in the special case of unitary evolution the trace
distance remains constant,
$d_{\TR}(\E_t(\rho_1),\E_t(\rho_2))\!=\!d_{\TR}(\rho_1,\rho_2)$ for
all $t$.

%% define Hilbert Schmidt distance

One drawback of the trace distance, however, is that it is not the
most intuitive distance measure for quantum states.  A more natural
choice is the Hilbert Schmidt (HS) distance
$d_{\HS}(A,B)=\norm{A-B}_{\HS}$ where
$\norm{A}_{\HS}=\sqrt{\ip{A}{A}}$ is the norm induced by the HS
inner product
\begin{equation}
  \ip{A}{B}_{\HS} = \Tr (A^\dag B).
\end{equation}
For Hermitian operators $A=A^\dag$, $B=B^\dag$ the HS inner product
and distance simplify to $\ip{A}{A}_{\HS}=\Tr(A^2)$ and
\begin{equation}
  d_{\HS}(A,B) = \sqrt{\Tr[(A-B)^2]}.
\end{equation}
If we choose an orthonormal basis $\{\sigma_k\}_{k=1}^{N^2}$ for the
Hermitian operators on $\H$ then the coordinate vector
$\tilde{\vec{a}}=(a_k)_{k=1}^{N^2}$ with $a_k=\Tr(\sigma_k A)$ for
any Hermitian operator $A$ on $\H$ is a vector in $\RR^{N^2}$, and
noting that $\Tr(\sigma_k\sigma_\ell)=\delta_{k\ell}$ by
orthonormality, shows that the HS inner product reduces to the
standard Euclidean inner product in $\RR^{N^2}$
\begin{equation*}
  \ip{A}{A}_{\HS}
  = \sum_{k,\ell=1}^{N^2} a_k a_\ell \Tr(\sigma_k\sigma_\ell)
  = \sum_{k=1}^{N^2} a_k^2 = \ip{\tilde{\vec{a}}}{\tilde{\vec{a}}}.
\end{equation*}
If we choose an orthonormal basis with the last basis vector
$\sigma_{N^2}=\frac{1}{\sqrt{N}}\ONE$, where $\ONE$ is the identity
on $\H$, then all Hermitian operators with constant trace $\Tr(A)=c$
correspond to hyperplanes in $\RR^{N^2}$ with
$a_{N^2}=\frac{1}{\sqrt{N}}c$.  Therefore, if we are only interested
in Hermitian operators with constant trace, such as the class of
density operators on $\H$, it suffices to consider the reduced
coordinate vector $\vec{a}=(a_1,\ldots,a_{N^2-1})$ and
\begin{equation}
   \ip{A}{A}_{\HS} = N^{-1} \Tr(A) + \ip{\vec{a}}{\vec{a}}.
\end{equation}
Furthermore, if $A,B$ are two Hermitian operators with
$\Tr(A)=\Tr(B)$ then $\Tr(A-B)=0$ and thus
\begin{equation}
  d_{\HS}(A,B) = \sqrt{\ip{\vec{a}-\vec{b}}{\vec{a}-\vec{b}}}
               = \norm{\vec{a}-\vec{b}},
\end{equation}
i.e., the HS distance between two Hermitian operators of the same
trace class, is simply the Euclidean distance of their associated
reduced (real) coordinate vectors, which generalize the Bloch vector
for $N=2$. This equivalence of HS distance between density operators
and the Euclidean distance between their reduced coordinate vectors
makes the HS distance a very intuitive and useful distance measure.

\section{Criteria for Monotonicity of HS Norm and Distance}

%% show trace and HS distance are the same for N=2
For $N=2$, i.e., a single qubit, it is easy to show that the trace
and HS distance agree up to a constant factor, and hence
contractivity of one implies contractivity of the other.  Let
$\rho_1$ and $\rho_2$ be two qubit density operators and let
$\vec{r}_1$ and $\vec{r}_2$ be their respective coordinate vectors
in $\RR^3$ with respect to the (orthonormal) basis
$\{\sigma_k\}_{k=1}^4$ defined above.  Then $\rho_j = \sum_{k=1}^3
r_{jk} \sigma_k + \frac{1}{2}\ONE$ for $j=1,2$ and setting
$\vec{\sigma}=(\sigma_1,\sigma_2,\sigma_3)$
\begin{equation}
  \rho_1-\rho_2 = \sum_{k=1}^3 (r_{1k}-r_{2k})\sigma_k
                = (\vec{r}_1-\vec{r}_2) \cdot \vec{\sigma}.
\end{equation}
Choosing $\vec{\sigma}$ to be, e.g., the standard (normalized) Pauli
basis, it is easy to verify that the eigenvalues of $\rho_1-\rho_2$
are $\pm \frac{1}{\sqrt{2}} \norm{\vec{r}_1-\vec{r}_2}$, and noting
that for Hermitian matrices $\Tr |A|$ is the sum of the absolute
values of the eigenvalues of $A$, we have~\footnote{If we used the
unnormalized Pauli matrices as a basis we would obtain a factor of
$\frac{1}{2}$ instead, but using an orthonormal basis the Bloch
sphere for a single qubit has radius $\sqrt{\frac{1}{2}}$ instead of
$1$, whence we only a factor of $\sqrt{\frac{1}{2}}$.}
\begin{equation}
\begin{split}
  d_{\TR}(\rho_1,\rho_2) &= \frac{1}{2} \Tr |\rho_1-\rho_2| \\
                         &= \frac{1}{2}
\frac{2}{\sqrt{2}}\norm{\vec{r}_1-\vec{r}_2}
                         = \frac{1}{\sqrt{2}} d_{\HS}(\rho_1,\rho_2).
\end{split}
\end{equation}
Hence, the HS distance for a single qubit is monotonically
decreasing under completely positive maps and quantum semigroup
dynamics.  This argument does not generalize, however, as the
relationship between the eigenvalues of a positive map and the
Euclidean norm of the coordinate vector that was used is very
specific to $N=2$.

%% show HS distance not monotonic under Lindblad dynamics for N>2
For $N>2$ we still have $\norm{A}_{\HS}\le\norm{A}_{\TR}$ from basic
functional analysis~\cite{Yosida}.  Thus the trace norm provides an
upper bound on the Hilbert-Schmidt norm.  Applied to open quantum
systems subject to semi-group dynamics governed by a Lindblad master
equation
\begin{equation}
 \label{eq:LME}
  \dot\rho(t) = -i[H,\rho(t)] + \L_D\rho(t),
\end{equation}
with $\L_D\rho(t)=\sum_{d}\D[V_d]\rho(t)$, $V_d\in\BB[\H]$ and
\begin{equation}
  \label{eq:D}
  \D[V_d] \rho(t) = V_d \rho(t) V_d^\dagger
                   - \frac{1}{2}(V_d^\dagger V_d \rho(t)
           + \rho(t) V_d^\dagger V_d),
\end{equation}
this means that if the trace distance between any two quantum states
goes to zero for $t\to\infty$, for instance, then the HS distance
must go to zero as well, but the convergence need not be monotonic.

Expanding the density operator $\rho$ and super-operators $\L_H$ and
$\L_D$ with respect to an orthonormal basis for the (trace-zero)
Hermitian matrices on $\H$, it is easy to show that the master
equation~(\ref{eq:LME}) becomes an affine-linear equation for the
coordinate vector $\vec{r}\in\RR^{N^2-1}$
\begin{equation}
  \label{eq:bloch}
  \dot{\vec{r}}(t) = \A \vec{r}(t) + \vec{c},
\end{equation}
where $\A$ is a $(N^2-1)\times(N^2-1)$ real matrix and
$\vec{c}\in\RR^{N^2-1}$ can be computed from the Hamiltonian and
Lindblad generators (See Appendix B).

\begin{proposition}
\label{prop:1} A necessary and sufficient condition for the HS norm
$\norm{\rho(t)}_2$ of any quantum state $\rho(t)$ to be
monotonically decreasing under the open system
dynamics~(\ref{eq:bloch}) is that the Bloch equation is linear,
i.e., $\vec{c}=\vec{0}$.
\end{proposition}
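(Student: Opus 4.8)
The plan is to translate the statement about the HS norm of $\rho(t)$ into one about the Euclidean norm of its coordinate vector $\vec{r}(t)$. Since for any state $\norm{\rho}_{\HS}^2 = N^{-1} + \norm{\vec{r}}^2$ and $s\mapsto\sqrt{N^{-1}+s}$ is strictly increasing, the HS norm decreases monotonically along every trajectory if and only if $\norm{\vec{r}(t)}^2$ does. Differentiating and inserting the Bloch equation~(\ref{eq:bloch}) gives
\begin{equation}
  \frac{d}{dt}\norm{\vec{r}}^2 = 2\vec{r}^T\dot{\vec{r}}
    = 2\vec{r}^T\A\vec{r} + 2\vec{r}^T\vec{c}
    = 2\vec{r}^T\A_s\vec{r} + 2\vec{r}^T\vec{c},
\end{equation}
where $\A_s=\tfrac12(\A+\A^T)$ is the symmetric part of $\A$. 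Monotonicity is thus equivalent to this quadratic-plus-linear form being $\le0$ for every $\vec{r}$ corresponding to a density operator. The one structural fact I would record first is the meaning of $\vec{c}$: it is the coordinate vector of $\L_D(\tfrac1N\ONE)$, because the Hamiltonian part annihilates the identity, and a short computation gives $\L_D(\tfrac1N\ONE)=\tfrac1N\sum_d[V_d,V_d^\dagger]$. Hence $\vec{c}=\vec{0}$ is precisely the unital condition $\sum_d V_dV_d^\dagger=\sum_d V_d^\dagger V_d$.

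For necessity I would test the condition on states infinitesimally close to the maximally mixed state. Since $\tfrac1N\ONE$ lies in the interior of the state space, $\rho=\tfrac1N\ONE+\eps\,\hat{c}\cdot\vec{\sigma}$ is a legitimate density operator for all sufficiently small $\eps>0$, where $\hat{c}=\vec{c}/\norm{\vec{c}}$ whenever $\vec{c}\neq\vec{0}$. Inserting $\vec{r}=\eps\hat{c}$ yields $\tfrac{d}{dt}\norm{\vec{r}}^2=2\eps^2\,\hat{c}^T\A_s\hat{c}+2\eps\norm{\vec{c}}$, whose sign is controlled by the linear term $2\eps\norm{\vec{c}}>0$ as $\eps\to0^+$. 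This contradicts monotonic decrease unless $\norm{\vec{c}}=0$, so $\vec{c}=\vec{0}$ is necessary.

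For sufficiency I would show that the unital condition forces the quadratic form to be negative semidefinite. Writing $X=\vec{r}\cdot\vec{\sigma}$ for the traceless part of $\rho$, the Hamiltonian term contributes nothing to $\tfrac{d}{dt}\norm{\rho}_{\HS}^2$ and trace preservation ($\Tr\L_D=0$) removes the remaining inhomogeneous pieces, leaving $\vec{r}^T\A_s\vec{r}=\Tr(X\L_D(X))$. Expanding $\D[V_d]$ and using cyclicity of the trace gives
\begin{equation}
  \Tr(X\L_D(X)) = \sum_d\left[\Tr(XV_dXV_d^\dagger) - \Tr(X^2V_d^\dagger V_d)\right].
\end{equation}
The step that makes everything work is to invoke $\sum_dV_d^\dagger V_d=\sum_dV_dV_d^\dagger$ to symmetrize the second term, after which the sum collapses to
\begin{equation}
  \Tr(X\L_D(X)) = -\tfrac12\sum_d\Tr\!\left([X,V_d]^\dagger[X,V_d]\right)
               = -\tfrac12\sum_d\norm{[X,V_d]}_{\HS}^2 \le 0.
\end{equation}
With $\vec{c}=\vec{0}$ the linear term vanishes identically, so $\tfrac{d}{dt}\norm{\vec{r}}^2\le0$ for every state and the HS norm is monotonically decreasing. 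I expect this sufficiency direction to be the crux. It is \emph{not} generally true that the purity $\Tr\rho^2$ can only decrease under Lindblad dynamics (non-unital generators can increase it), so the real content of the proposition is that the unital condition $\vec{c}=\vec{0}$ is exactly what converts the a priori indefinite form $\A_s$ into the manifestly negative semidefinite object $-\tfrac12\sum_d\norm{[X,V_d]}_{\HS}^2$.
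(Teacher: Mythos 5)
Your proof is correct, but it is genuinely different from the one in the paper. The paper's argument is a two-line reduction: it invokes the theorem of Perez-Garcia, Wolf, Petz and Ruskai~\cite{Garcia} that a positive trace-preserving map is contractive in the $2$-norm if and only if it is unital, and then observes that unitality of the semigroup is equivalent to $\vec{c}=\vec{0}$ because $\ONE$ corresponds to $\vec{r}=\vec{0}$. You instead give a self-contained infinitesimal argument: necessity by perturbing the maximally mixed state in the direction $\hat{c}$, where the linear term $2\eps\norm{\vec{c}}$ dominates the quadratic one as $\eps\to0^+$; sufficiency by the identity
$\sum_d\Tr(X\,\D[V_d]X)=-\tfrac12\sum_d\norm{[X,V_d]}_{\HS}^2+\tfrac12\Tr\bigl(X^2\sum_d[V_d,V_d^\dag]\bigr)$,
whose second term vanishes exactly when $\vec{c}=\vec{0}$ (I checked the algebra; it is right). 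What your route buys is independence from the cited black-box theorem, a proof that works directly at the level of the generator rather than of the finite-time maps, and, as a by-product, the paper's Proposition~3 (the criterion $\sum_d[V_d,V_d^\dag]=0$) together with an explicit negative-semidefinite representation of the symmetric part of the dissipative superoperator. What the paper's route buys is brevity and the fact that the cited result holds for arbitrary PTP maps, not just Lindblad semigroups. The only cosmetic point worth tightening is the phrase ``trace preservation removes the remaining inhomogeneous pieces'': trace preservation kills only the $\tfrac{2}{N}\Tr(\L(\rho))$ term, while the genuinely inhomogeneous term $2\vec{r}^T\vec{c}$ is killed by the hypothesis $\vec{c}=\vec{0}$, as you correctly state a line later.
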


\begin{proof}
If $T$ is a PTP map acting on the Hermitian operators on a
finite-dimensional Hilbert space then
$\norm{T\rho}_2\le\norm{\rho}_2$ if and only if $T$ is
unital~\cite{Garcia}.  For our open system dynamics, the evolution
is unital, i.e., preserves the identity $\ONE$, if and only if the
Bloch equation is homogeneous, i.e., $\vec{c}=0$, as the identity
$\ONE$ is mapped to $\vec{r}=\vec{0}$ in the trace-one hyperplane
the density operators live in.
\end{proof}

\begin{proposition}
\label{prop:2} A necessary and sufficient condition for the HS
distance $\norm{\rho_1(t)-\rho_2(t)}$ between quantum states to be
monotonically decreasing under the dynamics~(\ref{eq:bloch}) is that
the symmetric part $\A+\A^T$ of the evolution operator $\A$ be
negative definite, i.e., have no positive eigenvalues.
\end{proposition}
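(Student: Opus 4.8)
The plan is to reduce the statement about the HS distance to a statement about the Euclidean norm of the difference of coordinate vectors, and then to run a Lyapunov-type derivative computation. By the equivalence established above, $d_{\HS}(\rho_1(t),\rho_2(t)) = \norm{\vec{r}_1(t)-\vec{r}_2(t)}$, where $\vec{r}_1,\vec{r}_2$ are the reduced coordinate vectors of the two states. The first key observation is that since both $\vec{r}_1$ and $\vec{r}_2$ satisfy the same affine equation~(\ref{eq:bloch}) with the \emph{same} inhomogeneous term $\vec{c}$, their difference $\vec{\delta}(t) := \vec{r}_1(t)-\vec{r}_2(t)$ satisfies the homogeneous linear equation $\dot{\vec{\delta}} = \A\vec{\delta}$, the constant $\vec{c}$ cancelling. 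This is precisely what makes the distance (as opposed to the norm treated in Proposition~\ref{prop:1}) insensitive to $\vec{c}$.

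Next I would differentiate the squared distance. Writing $D(t) = \norm{\vec{\delta}(t)}^2 = \vec{\delta}^T\vec{\delta}$ and substituting $\dot{\vec{\delta}}=\A\vec{\delta}$ yields
\begin{equation*}
  \frac{d}{dt}D(t)
    = \dot{\vec{\delta}}^T\vec{\delta} + \vec{\delta}^T\dot{\vec{\delta}}
    = \vec{\delta}^T(\A^T + \A)\vec{\delta}.
\end{equation*}
Since $\frac{d}{dt}\norm{\vec{\delta}}$ carries the same sign as $\frac{d}{dt}D$ wherever $\vec{\delta}\neq\vec{0}$, the HS distance for a given pair of states is monotonically decreasing exactly when the quadratic form $\vec{\delta}^T(\A+\A^T)\vec{\delta}$ stays nonpositive along the trajectory.

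The sufficiency direction is then immediate: if $\A+\A^T$ has no positive eigenvalues it is negative semidefinite, so $\vec{\delta}^T(\A+\A^T)\vec{\delta}\le 0$ for every $\vec{\delta}\in\RR^{N^2-1}$, hence for every pair of states and all times. For necessity I would argue by contraposition: suppose $\A+\A^T$ has a positive eigenvalue with eigenvector $\vec{v}$, so that $\vec{v}^T(\A+\A^T)\vec{v}>0$. The remaining, and I expect principal, obstacle is that the quadratic form only needs to be controlled on the set of \emph{realizable} differences $\vec{\delta}$ of genuine density operators, not a priori on all of $\RR^{N^2-1}$, so one must exhibit an admissible pair whose coordinate difference points along $\vec{v}$.

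This obstacle is dissolved by the geometry of the state space. The maximally mixed state corresponds to $\vec{r}=\vec{0}$ and lies in the interior of the Bloch body, so a sufficiently small ball about the origin consists entirely of valid coordinate vectors. Taking $\rho_1,\rho_2$ with initial coordinates $\pm\tfrac{1}{2}\epsilon\vec{v}$ for small $\epsilon>0$ gives two legitimate density operators with $\vec{\delta}(0)=\epsilon\vec{v}$, whence $\frac{d}{dt}D|_{t=0} = \epsilon^2\,\vec{v}^T(\A+\A^T)\vec{v}>0$ and the distance strictly increases initially, violating monotonicity. Since the realizable differences therefore span every direction, nonpositivity of $\vec{\delta}^T(\A+\A^T)\vec{\delta}$ on states is equivalent to the absence of positive eigenvalues of $\A+\A^T$, establishing both necessity and sufficiency.
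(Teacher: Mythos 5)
Your proof is correct and follows essentially the same route as the paper: pass to the difference of coordinate vectors, observe that $\vec{c}$ cancels so the difference obeys the homogeneous equation $\dot{\vec{\delta}}=\A\vec{\delta}$, and differentiate the squared distance to obtain the quadratic form $\vec{\delta}^T(\A+\A^T)\vec{\delta}$. Your treatment of necessity is in fact slightly more careful than the paper's, which only asserts the equivalence; your construction of an admissible pair near the maximally mixed state (an interior point of the Bloch body) with difference along the offending eigenvector is the argument the paper defers to a remark after its example, and it closes the realizability gap cleanly.
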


\begin{proof}
Denote the set of all physical coordinate vectors as
$\DD_{\RR}[\H]$, corresponding to $\DD[\H]$.  We can easily see that
the eigenvalues of $\A$ must all have non-positive real parts since
the dynamical system is invariant on the compact set $\DD_{\RR}[\H]$.
Moreover, if $\rho_1(0)=\rho_1$ and $\rho_2(0)=\rho_2$ are two
initial states with coordinate vectors $\vec{r}_1$ and $\vec{r}_2$,
respectively, then $\vec{\Delta}(t)=\vec{r}_1(t)-\vec{r}_2(t)$
satisfies the linear equation
\begin{equation}
  \label{eq:bloch2}
  \dot{\vec{\Delta}}(t) = \A \vec{\Delta}(t) \quad \Rightarrow \quad
  \vec{\Delta}(t)       = e^{t\A}\vec{\Delta}(0)
\end{equation}
and thus we have
\begin{align*}
 \frac{d}{dt} d_{\HS}^2(\vec{r}_1(t),\vec{r}_2(t))
 &= \ip{\dot{\vec{\Delta}}(t)}{\vec{\Delta}(t)}
   +\ip{\vec{\Delta}(t)}{\dot{\vec{\Delta}}(t)} \\
 &= (\A\vec{\Delta}(t))^T\vec{\Delta}(t) +
    \vec{\Delta}(t) A\vec{\Delta}(t) \\
 &= \vec{\Delta}(t)^T ({\A}^T+\A) \vec{\Delta}(t).
\end{align*}
As $\A+\A^T$ is real symmetric, its eigenvalues are real, and the HS
distance will be monotonically decreasing, if and only if $\A+\A^T$
has only non-positive eigenvalues.
\end{proof}

Note that although we know that the eigenvalues of $\A$ have
non-positive real parts, this does not imply that $\A+\A^T$ has
non-positive (real) eigenvalues in general.

\begin{example}
Consider the system $\dot\rho(t)=\D[V]\rho$ with the simple Lindblad
generator
\begin{equation*}
  V=\begin{pmatrix} 1 & 1 & 1\\ 0 & 1 & 1 \\ 0 & 0 & 1 \end{pmatrix}.
\end{equation*}
Choosing $\{\sigma_k:k=1,\ldots,8\}$ to be the standard orthonormal
basis for the trace-zero Hermitian matrices~(\ref{eq:pauliN}) for
$N=3$, and $\sigma_{N^2}=\frac{1}{\sqrt{3}}\ONE$, we obtain
\begin{equation*}
\A = \begin{pmatrix}
    -1 & 0 & 0 & 1 & \frac{1}{3}\sqrt{3} & 0 & \frac{1}{2} & 1\\
    0 & -\frac{1}{2} & 0 & 0 & 0 & \frac{1}{2} & 0 & 0\\
    0 & -1 & -1 & 0 & 0 & \frac{1}{2} & 0 & 0\\
    -1& 0 & 0 & -\frac{1}{2} & -\frac{2}{3}\sqrt{3} & 0  & 0 & \frac{3}{2}\\
    0 & 0 & 0 & 0 & -2 & 0 & \frac{1}{2}\sqrt{3} &\sqrt{3}\\
    0 & \frac{1}{2} &-\frac{1}{2} & 0 & 0& -\frac{3}{2} & 0 & 0\\
    -\frac{1}{2} & 0 & 0 & -1 & -\frac{1}{2}\sqrt{3} & 0 & -1 & \frac{1}{2}\\
    1 & 0 & 0 & -\frac{1}{2} &-\frac{1}{3}\sqrt{3} & 0 &-\frac{1}{2}&
-\frac{3}{2}
    \end{pmatrix}
\end{equation*}
as well as $\vec{c}=\frac{\sqrt{2}}{3}(1,0,0,1,\sqrt{3},0,0,-1)^T$.
It is easy to check that $\A$ is invertible, its eigenvalues have
negative real parts, and the system has a unique steady state
$\vec{r}_{\ss}=-\A^{-1}\vec{c}$ corresponding to
\begin{equation*}
  \rho_{\ss} = \frac{1}{5}
  \begin{pmatrix}
  2 & -1 & 0 \\
 -1 &  2 &-1 \\
  0 & -1 & 1
  \end{pmatrix}.
\end{equation*}
However, $[A,A^T]\neq 0$, i.e., $\A$ is not normal and $\A+\A^T$ has
a \emph{positive} eigenvalue $\gamma=0.1914$ with eigenvector
$\vec{v}$. Hence, setting $\vec{\Delta}(0)=\alpha\vec{v}$ with
$\alpha>0$ chosen such that
$\vec{r}(0)=\vec{r}_{\ss}+\alpha\vec{v}\in\DD_\RR[\H]$ gives a
trajectory for which
\begin{equation}
  \frac{d}{dt}d^2(0)=\alpha^2\vec{v}^T (A^T+A) \vec{v}
  =\gamma\alpha^2\norm{\vec{v}}^2>0,
\end{equation}
and thus the distance from the steady state increases at least
initially.  Indeed Fig.~\ref{fig1} shows that for
$\vec{\Delta}(0)=\frac{1}{9}\vec{v}$ and $\vec{s}(0)\in\DD_\RR[\H]$
the HS distance $d_{\HS}(t)=\norm{\vec{\Delta}(t)}$ eventually
converges to $0$---as it must as $\A$ has no eigenvalues with real
part $0$---but it increases initially.
\end{example}

\begin{figure}
\includegraphics[width=\columnwidth]{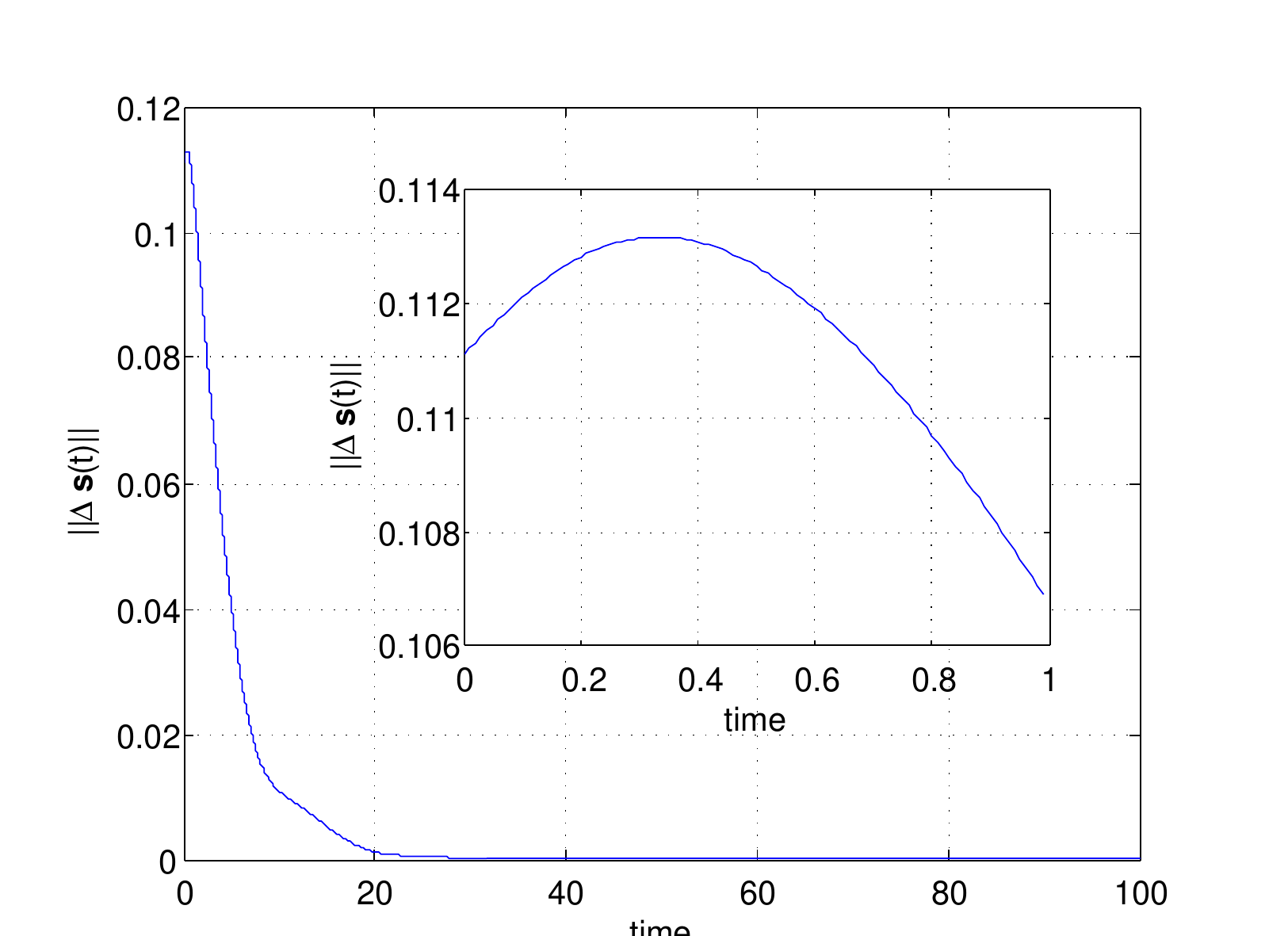}
\caption{Non-monotonic convergence to steady state.} \label{fig1}
\end{figure}

Whenever $\A+\A^T$ has a positive eigenvalue $\gamma$ with
eigenvalue $\vec{v}$ then the distance between any two initial
states $\rho_1(0)$ and $\rho_2(0)$, whose corresponding real
coordinate vectors satisfy $\vec{r}_1-\vec{r}_2=\alpha \vec{v}$,
will increase at least initially.  Furthermore, if the system has a
steady state $\rho_{\ss}$ in the interior of $\DD[\H]$ then there
are initial states that do not converge to the steady state
monotonically with respect to the HS distance.  This is easy to see
since for a point in the interior, it is always possible to choose
$\alpha>0$ such that $\vec{s}(0)=\vec{s}_{\ss}+\alpha\vec{v} \in
\DD_\RR[\H]$, where $\DD_\RR[\H]$ is the subset of $\RR^{N^2-1}$
corresponding to physical states, for any $\vec{v}$.

\section{Sufficient Conditions on Lindblad operators for Monotonicity}

A necessary and sufficient condition for the HS norm of a quantum
state to be monotonically decreasing is that $\L_D(\ONE)=0$.
Inserting this into the LME~(\ref{eq:LME}) leads to $\sum_d
[V_d,V_d^\dag]=0$, which gives the following explicit result:

\begin{proposition}
The HS norm of any quantum state is monotonically decreasing under
the open system dynamics~(\ref{eq:bloch}) if and only if the
Lindblad operators $V_d$ satisfy $\sum_d [V_d,V_d^\dag] =0$.
\end{proposition}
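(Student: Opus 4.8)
The plan is to reduce the stated criterion to the unitality condition already in hand and then carry out a one-line computation. By Proposition~\ref{prop:1} the HS norm $\norm{\rho}_2$ of every state is monotonically decreasing if and only if the inhomogeneous term vanishes, $\vec{c}=\vec{0}$; and the proof of that proposition identifies this with the map being unital, $\L(\ONE)=0$, for the full generator $\L=\L_H+\L_D$. So the first step is simply to invoke this equivalence, after which the only remaining task is to re-express the condition $\L(\ONE)=0$ in terms of the Lindblad operators.

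Next I would observe that the Hamiltonian part cannot obstruct unitality: since $\L_H\rho=-i[H,\rho]$ and $[H,\ONE]=0$, we have $\L_H(\ONE)=0$ identically, so the condition $\L(\ONE)=0$ collapses to the purely dissipative requirement $\L_D(\ONE)=0$. The final step is then a direct evaluation of $\L_D(\ONE)$ from the definition~(\ref{eq:D}): substituting $\rho=\ONE$ into each term gives $\D[V_d]\ONE = V_d\ONE V_d^\dag - \frac{1}{2}(V_d^\dag V_d\,\ONE + \ONE\,V_d^\dag V_d) = V_d V_d^\dag - V_d^\dag V_d = [V_d,V_d^\dag]$, and summing over $d$ yields $\L_D(\ONE)=\sum_d[V_d,V_d^\dag]$. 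Hence $\L_D(\ONE)=0$ if and only if $\sum_d[V_d,V_d^\dag]=0$, which, together with the chain of equivalences above, establishes the claim.

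I do not expect any genuine obstacle here, as the proposition is essentially a restatement of unitality in terms of the Lindblad data. The only points requiring care are confirming that the Hamiltonian term drops out when acting on the identity and that the two anticommutator terms in~(\ref{eq:D}) collapse to $-V_d^\dag V_d$ on $\ONE$; both are immediate. If one wished the argument to be fully self-contained, one could re-derive the equivalence between unitality and $\vec{c}=\vec{0}$ directly instead of citing Proposition~\ref{prop:1}, but since that equivalence is already available the computation above suffices.
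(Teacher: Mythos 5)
Your argument is correct and coincides with the paper's own reasoning: the paper likewise reduces the claim to the unitality condition $\vec{c}=\vec{0}$ of Proposition~\ref{prop:1}, i.e., $\L_D(\ONE)=0$, and then evaluates $\D[V_d]\ONE=[V_d,V_d^\dag]$ to obtain the stated criterion. You merely spell out the two immediate checks (the Hamiltonian term annihilating $\ONE$ and the collapse of the anticommutator terms) that the paper leaves implicit.
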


We know that contractivity of the HS norm is a sufficient condition
for contractivity of the HS distance, but we can derive other
sufficient conditions.

\begin{proposition}
The distance between any two quantum states is monotonically
decreasing under the open system dynamics~(\ref{eq:bloch}) if $\A$
is normal.
\end{proposition}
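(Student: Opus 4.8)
The plan is to reduce the claim to Proposition~\ref{prop:2} and then exploit normality through the spectral theorem. By Proposition~\ref{prop:2} the HS distance is monotonically decreasing if and only if the symmetric part $\A+\A^T$ has only non-positive eigenvalues, so it suffices to show that normality of $\A$ forces this. The one piece of external input I would carry over is the observation already made in the proof of Proposition~\ref{prop:2}: because the flow leaves the compact set $\DD_{\RR}[\H]$ invariant, every eigenvalue of $\A$ has non-positive real part. The whole point of assuming normality is that it lets me convert this statement about real parts of eigenvalues of $\A$ into a statement about eigenvalues of $\A+\A^T$, a passage that fails for general $\A$ (indeed the Example exhibits an $\A$ whose eigenvalues all have negative real part yet whose symmetric part has a positive eigenvalue).

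First I would note that, since $\A$ is a real matrix, its conjugate transpose coincides with its transpose, so $\A+\A^T=\A+\A^\dagger$ is twice the Hermitian part of $\A$. Because $\A$ is normal it is unitarily diagonalizable over $\mathbb{C}$, i.e.\ there is a unitary $U$ and a diagonal matrix $D=\diag(\lambda_k)$ of eigenvalues with $\A=UDU^\dagger$. Taking the Hermitian part then gives
\begin{equation*}
  \A+\A^T = \A+\A^\dagger = U\bigl(D+D^\dagger\bigr)U^\dagger
          = U\,\diag\!\bigl(2\,\Re\lambda_k\bigr)\,U^\dagger ,
\end{equation*}
so the eigenvalues of $\A+\A^T$ are precisely $2\,\Re\lambda_k$, twice the real parts of the eigenvalues of $\A$. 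Since each $\Re\lambda_k\le 0$, every eigenvalue of $\A+\A^T$ is non-positive, so $\A+\A^T$ is negative semidefinite and the conclusion follows from Proposition~\ref{prop:2}.

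The hard part is conceptual rather than computational: one must recognize that it is exactly the simultaneous diagonalizability of $\A$ and $\A^\dagger$ granted by normality that makes the eigenvalues of the sum equal to the sums of the eigenvalues. The only technical care I anticipate is in the step replacing $\A^T$ by $\A^\dagger$ and invoking a complex unitary diagonalization for a real matrix, where the eigenvalues $\lambda_k$ may be genuinely complex (occurring in conjugate pairs) and $D^\dagger=\bar D$; this is harmless because only the real parts survive in $D+D^\dagger$. If one prefers to stay entirely within the reals, the same conclusion follows from the orthogonal block-diagonalization of a real normal matrix, in which a complex pair $a\pm ib$ appears as a $2\times2$ block whose symmetric part is $2a\,\ONE$, again non-positive precisely when $a=\Re\lambda_k\le 0$.
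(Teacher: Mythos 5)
Your proof is correct and takes essentially the same route as the paper's: unitary diagonalization of the normal matrix $\A$, so that $\A+\A^T=U(D+D^\dagger)U^\dagger$ has eigenvalues $2\,\Re\lambda_k\le 0$, combined with the criterion of Proposition~2. You are somewhat more explicit than the paper about why $\Re\lambda_k\le 0$ (invariance of the compact set of physical states) and about the real-versus-complex diagonalization subtlety, but the underlying argument is identical.
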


\begin{proof}
If $\A$ is normal, i.e., $[\A,\A^T]=0$, then there exists a unitary
transformation $\mathbf{U}\in\SU(N^2-1)$ such that
$\A=\mathbf{UDU}^\dag$, where $\mathbf{D}$ is a diagonal matrix.
Hence, noting that $\A$ is real, $\A^T=\A^\dag=\mathbf{UD^\dag
U^\dag}$, and thus $\A+\A^T=\mathbf{U(D+D^\dag)U^\dag}$ shows that
the eigenvalues of $\A+\A^T$ are twice the real parts of those of
$\A$, i.e., non-positive.
\end{proof}

For simple Lindblad equations we can further show that normality of
the Lindblad operators is a sufficient condition for normality of
$\A$.

\begin{proposition}
For a system governed by the purely dissipative LME
$\dot\rho(t)=\D[V]\rho(t)$ the superoperator $\A+\A^T$ is normal if
$V$ is normal.
\end{proposition}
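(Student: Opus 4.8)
The plan is to show that $\A$ itself is normal: the matrix $\A+\A^T$ is symmetric and hence normal automatically, so the substance of the claim (and what the preceding remark requires) is $[\A,\A^T]=0$, from which the previous proposition yields monotonic decrease. I would work in the Liouville (vectorised) picture and write the dissipator through the left- and right-multiplication superoperators $L_X:\rho\mapsto X\rho$ and $R_X:\rho\mapsto\rho X$, giving $\L_D=L_VR_{V^\dagger}-\frac{1}{2}(L_{V^\dagger V}+R_{V^\dagger V})$. The toolkit is the set of identities $[L_X,R_Y]=0$, $L_XL_Y=L_{XY}$, $R_XR_Y=R_{YX}$ together with the HS-adjoint rules $L_X^\dagger=L_{X^\dagger}$ and $R_X^\dagger=R_{X^\dagger}$, the latter following at once from $\langle X\rho,\eta\rangle=\Tr(\rho^\dagger X^\dagger\eta)=\langle\rho,X^\dagger\eta\rangle$.

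First I would read off the HS-adjoint superoperator $\L_D^\dagger=L_{V^\dagger}R_V-\frac{1}{2}(L_{V^\dagger V}+R_{V^\dagger V})$, the dissipative part being self-adjoint because $V^\dagger V$ is Hermitian. Then I would expand the superoperator commutator $[\L_D,\L_D^\dagger]$ term by term. The product of the two ``jump'' pieces collapses to $L_{VV^\dagger}R_{VV^\dagger}$ in one order and $L_{V^\dagger V}R_{V^\dagger V}$ in the other, whose difference vanishes exactly when $VV^\dagger=V^\dagger V$; the cross terms against $\frac{1}{2}(L_{V^\dagger V}+R_{V^\dagger V})$ reduce to superoperators built from $[V,V^\dagger V]=[V,V^\dagger]V$ and $[V^\dagger V,V^\dagger]=V^\dagger[V,V^\dagger]$, both of which vanish for normal $V$. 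Collecting the terms gives $[\L_D,\L_D^\dagger]=0$, i.e.\ $\L_D$ is a normal superoperator on $\BB[\H]$.

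The step I expect to be the real obstacle is descending from normality of $\L_D$ on the full (complex) operator space to normality of the real $(N^2-1)\times(N^2-1)$ matrix $\A$. Two observations close the gap. First, $\L_D$ preserves Hermiticity, and in a real orthonormal Hermitian basis $\{\sigma_k\}$ the entries $\Tr(\sigma_k\L_D\sigma_\ell)$ are real with $\Tr(\sigma_\ell\L_D\sigma_k)=\langle\sigma_k,\L_D^\dagger\sigma_\ell\rangle$, so the transpose of the full real representation $M$ represents precisely $\L_D^\dagger$ and $[\L_D,\L_D^\dagger]=0$ descends to $[M,M^T]=0$. Second, normality of $V$ forces $\L_D\ONE=[V,V^\dagger]=0$, so the identity direction $\sigma_{N^2}\propto\ONE$ is annihilated, giving $\vec c=\vec 0$; combined with the trace-annihilating property $\Tr(\L_D\rho)=0$ (which zeroes the $\sigma_{N^2}$ row) this makes $M$ block-diagonal, $M=\A\oplus 0$, so $[M,M^T]=0$ restricts to $[\A,\A^T]=0$. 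The delicate point is exactly this decoupling of the identity block: if $\vec c\neq\vec 0$ the trace and trace-zero sectors mix and normality of $M$ would not pass to $\A$ --- the earlier Example, with non-normal $V$ and $\vec c\neq\vec 0$, exhibits a non-normal $\A$ --- so it is essential that normality of $V$ simultaneously enforces unitality.
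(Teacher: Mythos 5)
Your proof is correct, but it proceeds along a genuinely different route from the paper's. The paper works entrywise: it writes out $a_{\ell j}=\Tr(V^\dagger\sigma_\ell V\sigma_j)-\frac{1}{2}\Tr(V^\dagger V\{\sigma_\ell,\sigma_j\})$, forms $(\A\A^T)_{\ell k}$ and $(\A^T\A)_{\ell k}$, and collapses the sums over the basis using the completeness relation $\sum_j\Tr(H_1\sigma_j)\Tr(H_2\sigma_j)=\Tr(H_1H_2)$, arriving at an explicit trace identity (every term of which carries a factor $[V,V^\dagger]$ or a difference of $VV^\dagger$- versus $V^\dagger V$-terms) that characterises normality of $\A_V$ and is manifestly satisfied when $[V,V^\dagger]=0$. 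You instead prove the basis-free statement $[\L_D,\L_D^\dagger]=0$ in the $L_X$/$R_X$ algebra and then push normality down to the real matrix representation. Your route buys two things the paper's does not make explicit: a clean separation of the algebraic content from the bookkeeping, and --- more substantively --- a treatment of the reduction from the full $N^2\times N^2$ representation $M$ to the $(N^2-1)\times(N^2-1)$ block $\A$, where you correctly observe that normality of $V$ forces $\L_D\ONE=[V,V^\dagger]=0$ and hence the block decomposition $M=\A\oplus 0$ needed for $[M,M^T]=0$ to descend to $[\A,\A^T]=0$; the paper sidesteps this by computing directly in the trace-zero sector. The paper's route, in exchange, yields a necessary \emph{and} sufficient trace condition for normality of $\A_V$, which is more information than bare sufficiency. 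Two small remarks: you rightly note that the proposition's literal wording (``$\A+\A^T$ is normal'') is vacuous for a symmetric matrix and that the intended and proved claim is normality of $\A$ itself; and your parenthetical ``vanishes exactly when $VV^\dagger=V^\dagger V$'' for the difference $L_{VV^\dagger}R_{VV^\dagger}-L_{V^\dagger V}R_{V^\dagger V}$ overstates the ``only if'' direction, but only the ``if'' direction is used, so nothing is affected.
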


\begin{proof}
Let $a_{\ell k}$ be the $(\ell,k)$th component of $\A_V$.  From the
definition of the Bloch equation it follows
\begin{equation}
  a_{\ell j} = \Tr(V^\dagger \sigma_\ell V\sigma_j)
               -\frac{1}{2}\Tr(V^\dag V \{\sigma_\ell,\sigma_j\}),
\end{equation}
where ${A,B}=AB+BA$ is the anticommutator, and thus the $(\ell,k)$th
component of $\A_V \A_V^T$ is equal to $\sum_j a_{\ell j} a_{kj}$
and the $(\ell,k)$th component of $\A_V^T \A_V$ is $\sum_j
a_{j\ell}a_{jk}$. If $H_1$ and $H_2$ are two Hermitian matrices and
at least one has zero trace, we have the identity
\begin{align*}
  \sum_j \Tr(H_1\sigma_j)\Tr(H_2\sigma_j) = \Tr(H_1H_2),
\end{align*}
and thus normality of $A_V$ is equivalent to
\begin{align*}
 0=& \Tr(\sigma_\ell V \sigma_k V^\dag [V,V^\dag])
    +\Tr(\sigma_k V \sigma_\ell V^\dag [V,V^\dag])\\
   &+\Tr(V \sigma_\ell V^\dag \sigma_k [V,V^\dag])
    +\Tr(V \sigma_k V^\dag \sigma_\ell [V,V^\dag])\\
    &+\Tr(V V^\dag \sigma_\ell V V^\dag\sigma_k )-
    \Tr(V^\dag V \sigma_\ell V^\dag V \sigma_k ),
\end{align*}
for all $k,\ell$, which is satisfied if $[V,V^\dag]=0$.
\end{proof}

If there are multiple Hamiltonian and Lindblad terms then by
linearity of the master equation, the superoperator splits, i.e.,
$\A=\A_H+\sum_d \A_{V_d}$, where $\A_H$ is associated with the
Hamiltonian dynamics and $\A_{V_d}$ corresponds to the decoherence
operator $V_d$.  It is easy to see that $\A_H$ is
real-antisymmetric, and thus $\A_H+\A_H^T=0$.  Since the sum of
negative semi-definite matrices is negative semi-definite, $\A+\A^T$
is negative semi-definite if $\A_{V_d}+\A_{V_d}^T$ is negative
semi-definite for all $d$, and the latter is the case if $\A_{V_d}$
is normal.  Thus we can conclude that for a system governed by the
LME~(\ref{eq:LME}) the superoperator $\A+\A^T$ is negative definite
if all Lindblad operators $V_d$ are normal.  This is of course
consistent with the previous observation that $\sum_d
[V_d,V_d^\dag]=0$ implies contractivity of the HS norm, and hence
contractivity of the HS distance.  We might thus conjecture that the
two sufficient conditions for contractivity of the HS distance are
equivalent, but this is not the case.

\begin{example}
Consider the a three-level system with $H=0$ and two Lindblad
operators
\begin{equation}
  V_1 = \begin{pmatrix} 0 & 0 & 1\\ 0 & 0 & 0 \\ 0 & 0 & 0 \end{pmatrix},
  \;
  V_2 = \begin{pmatrix} 0 & 0 & 0\\ 1 & 0 & 0 \\ 0 & 1 & 0 \end{pmatrix}.
\end{equation}
It is easy to check that $[V_d,V_d^\dag]\neq 0$ for $d=1,2$ but
$\sum_{d=1,2} [V_d,V_d^\dag]=0$ and thus $\vec{c}=\vec{0}$.  Thus
the evolution is unital and both the HS norm and distance are
contractive, but the superoperator $\A$ is \emph{not} normal.
\end{example}

Hence, the evolution of the system may be unital even if $\A$ is not
normal.  Similarly, the superoperator $\A$ of an open system may be
normal even if the evolution is not unital, showing that the
condition $\sum_d [V_d,V_d^\dag]=0$ is not a necessary condition for
the HS distance between quantum states to be monotonically
decreasing.

\begin{figure}
\includegraphics[width=0.4\columnwidth]{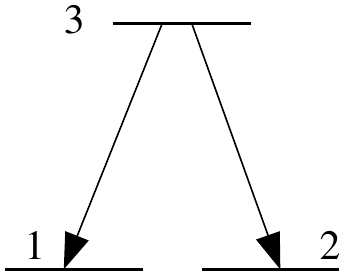}
\caption{Three-level $\Lambda$-system with (independent) decay from
the excited state to either of the two ground states.} \label{fig2}
\end{figure}

\begin{example}
\label{example:two}
%$\A$  non-normal operators but eigenvalues of $\A+\A^T$ all negative.
Consider a three-level $\Lambda$ system with decay from the excited
state to the ground state as pictured in Fig.~\ref{fig2}. Neglecting
the Hamiltonian part, the evolution equation given by two
spontaneous emission processes, characterized by Lindblad operators
\begin{equation*}
  V_1 = \begin{pmatrix} 0 & 0 & 1\\ 0 & 0 &0 \\ 0 & 0 & 0
  \end{pmatrix},
  V_2 = \begin{pmatrix} 0 & 0 & 0\\ 0 & 0 &1 \\ 0 & 0 & 0 \end{pmatrix}.
\end{equation*}
It easy to check that $[V_1,V_1^\dag]=\diag(1,0,-1)$ and
$[V_2,V_2^\dag]=\diag(0,1,-1)$ and thus $\sum_d
[V_d,V_d^\dag]\neq0$. Thus the evolution is not unital, and neither
of the corresponding superoperators $\A_{V_1}$ and $\A_{V_1}$ are
normal, but if both decay processes are equally likely, the
off-diagonal terms in the sum $\A_{V_1}+\A_{V_1}$ cancel, and the
resulting superoperator $\A$ is diagonal, hence normal.  Thus the HS
distance between quantum states under this semi-group dynamics is
monotonically decreasing although the HS norm is not.
\end{example}

\section{Monotonicity of HS Distance for Non-normal $\A$, non-unital evolution}

We have shown that contractivity of the HS norm is a sufficient but
not a necessary requirement for contractivity of the HS distance,
and that normality of the dissipative part of $\A$~\footnote{The
Hamiltonian part of the dynamics does not contribute to the
symmetric part $\A+\A^T$ of $\A$ and thus is of no concern here.} is
an alternative sufficient condition for monotonicity of the HS
distance under Lindblad semi-group dynamics.  Finally, we show that
even both sufficient conditions together are not necessary, i.e.,
there are systems for which $\A$ is not normal and $\vec{c}\neq 0$,
but the HS distance is still monotontically decreasing.

\begin{figure}
\includegraphics[width=\columnwidth]{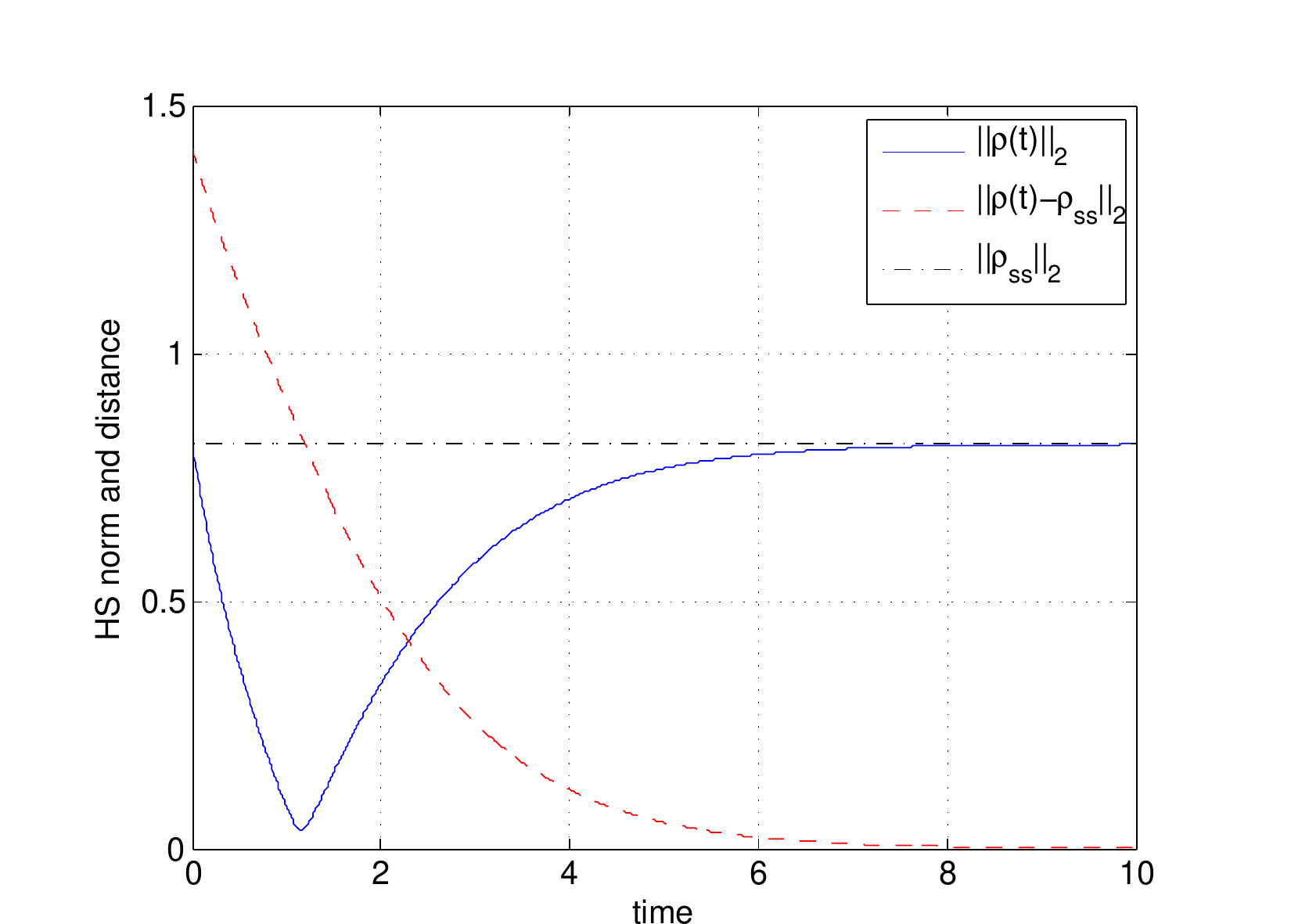}
\caption{Bloch superoperator $\A$ not normal but negative definite.
The evolution is not unital and $\norm{\rho(t)}_2$ for
$\rho_0=\diag(0,0,1)$ shows that the HS norm is not contractive.
The the distance of $\rho(t)$ from the steady state $\rho_{ss}$,
however, is monotonically decreasing, as is the distance between any
two quantum states.} \label{fig3}
\end{figure}

\begin{example}
%$\A$ not normal but eigenvalues of $\A+\A^T$ all negative.
Consider the purely dissipative semi-group dynamics given by
$\dot\rho(t)=\D[V]\rho(t)$ with
\begin{equation*}
  V = \begin{pmatrix} 0 & 1 & 0\\ 0 & 0 &1 \\ 0 & 0 & 0 \end{pmatrix}.
\end{equation*}
$V$ is clearly not normal and neither is the superoperator $\A$,
which, with respect to the standard basis~(\ref{eq:pauliN}) for the
trace-zero Hermitian matrices for $N=3$, takes the explicit form
\begin{equation*}
 \A = \begin{pmatrix}
        -1&0&0&0&\sqrt{\frac{4}{3}}&0&0&0\\
         0&-\frac{1}{2}&0&0&0&1&0&0\\
         0&0&-\frac{1}{2}&0&0&0&0&0\\
         0&0&0&-\frac{1}{2}&0&0&0&1\\
         0&0&0&0&-1&0&0&0\\
         0&0&0&0&0&-1&0&0\\
         0&0&0&0&0&0&-\frac{1}{2}&0\\
         0&0&0&0&0&0&0&-1
\end{pmatrix}.
\end{equation*}
Furthermore, we have $\vec{c}\neq 0$, i.e., the evolution is not
unital. Nonetheless, we can check that the eigenvalues of $\A+\A^T$
are $-2 \pm \frac{2}{3}\sqrt{3}$, $-\frac{3}{2}\pm
\frac{1}{2}\sqrt{5}$ and $-1$, all of which are negative.  (There
are only five distinct eigenvalues as the last three occur with
multiplicity $2$.)  Thus the HS distance between any two states is
monotonically decreasing, and we can easily verify that the system
has a unique steady state $\rho_{\ss}=\ket{1}\bra{1}$ at the
boundary of $\DD[\H]$.  The HS norm, of course, is not contractive.
E.g., if we start in the state $\rho_0 = \diag(0,0,1)$ then the HS
norm $\norm{\rho(t)}_2=\sqrt{\Tr(\rho(t)^2)}$ first decreases and
then increases as shown in Fig.~\ref{fig3}.
\end{example}

However, examples when $\A$ is not normal, the evolution not unital
and the HS norm non-contractive, but the HS distance is still
monotonically decreasing appear to be increasingly hard to find in
higher dimensions. Numerical tests with randomly generated simple
Lindblad generators $V$ suggest that when the dissipative part of
$\A$ is not normal then there is a high probability that the
symmetric part $\A+\A^T$ will have at least one positive eigenvalue.
Moreover, both the probability of a positive eigenvalue and the
number of positive eigenvalues of $\A+\A^T$ appear to increase with
the system dimension (See Table I).

\begin{table}
\begin{tabular}{|l|c|c|c|c|c|c|c|}
\hline
 $N$                 & 2   & 3 & 4 & 5 & 6 & 7 & 8 \\\hline
 non-contractive     & 0\% & 63.7\% & 95.4\% & 100\% & 100\% & 100\% & 100\%
\\\hline
 max. no. $\gamma>0$ & 0   & 1 & 3 & 4 & 7 & 9  &  11 \\\hline
\end{tabular}
\caption{Percentage of systems governed by the Lindblad master
equation $\dot\rho(t)=\D[V]\rho(t)$ with randomly generated $V$, for
which the HS distance is not monotonically decreasing and maximum
number of positive eigenvalues $\gamma$ of $\A+\A^T$ as a function
of the system dimension shows a rapid increase in the frequency of
non-contractive dynamics for increasing $N$.  For each $N$ 1000
systems were simulated.  All randomly generated $V$ (and hence $\A$)
were non-normal, as expected as normal matrices form a measure-zero
set.}  \label{table1}
\end{table}

\section{Conclusion}

Unlike the trace distance, and perhaps contrary to intuition, the HS
norm is generally not contractive under positive trace-preserving
maps, except for $N=2$.  For open systems governed by a Lindblad
master equation the necessary and sufficient conditions for
contractivity of the HS norm translate into a necessary and
sufficient condition for the Lindblad generators.  This condition is
also sufficient to ensure that the HS distance between quantum
states is monotonically decreasing, but it is not necessary.  We
derive alternative necessary and sufficient conditions for
monotonicity of the HS distance in terms of the spectrum of the
symmetric part of the super-operator, and show that they lead to
alternative sufficient conditions for monotonicity of the HS
distance, which are strictly weaker than those for monotonicity of
the HS norm. This means that the HS distance between any two quantum
states under Lindblad dynamics can be monotonically decreasing even
if the HS norm $\norm{\rho(t)}_2$ of quantum states is not monotonic
under this evolution.  Although the criteria for monotonicity of the
HS distance are weaker, in general it is not monotonically
decreasing if the Hilbert space dimension is greater than $2$, even
for systems that have a unique steady state.  The non-monotonicity
of the HS distance has important implications for, e.g., quantum
control, showing that unlike for Hamiltonian systems, it is not a
suitable Lyapunov function for generic open systems.  It is a
suitable candidate for a Lyapunov function only in very special
cases, e.g., when all the Lindblad decoherence operators are normal.
Although such systems are only a small subset of possible systems in
higher dimensions, they do include the important special case where
the decoherence is induced by measurement of a Hermitian observable.

\acknowledgments

XW is supported by a studentship from the Cambridge Overseas Trust
and an Elizabeth Cherry Major Scholarship from Hughes Hall,
Cambridge. SGS acknowledges funding from EPSRC Advanced Research
Program Grant RG44815, the EPSRC QIP Interdisciplinary Research
Collaboration (IRC) and Hitachi.  We sincerely thank Berry Groisman,
Francesco Ticozzi, Arieh Iserles, Ismail Akhalwaya, Pierre de
Fouquieres, and David Perez-Garcia for helpful comments and
discussions.

\appendix
\section{Standard Basis for Hermitian Matrices}
\label{app:A}

A standard orthonormal basis for the trace-zero Hermitian matrices
for any $N$ is given by $\{\sigma_k\}_{k=1}^{N^2-1}$ where
$\sigma_k=\sigma_{k(r,s)}$ with $k=r+(s-1)N$ and
\begin{subequations}
  \label{eq:pauliN}
  \begin{align}
    \sigma_{rs} &= \textstyle \frac{1}{\sqrt{2}}(\ket{r}\bra{s} +
\ket{s}\bra{r}) \\
    \sigma_{sr} &= \textstyle i\frac{1}{\sqrt{2}}(-\ket{r}\bra{s} +
\ket{s}\bra{r})\\
    \sigma_{rr}  &= \textstyle \frac{1}{\sqrt{r+r^2}}
          \left(\sum_{k=1}^r \ket{k}\bra{k} - r\ket{r+1}\bra{r+1} \right)
\end{align}
\end{subequations}
for $1\le r\le N-1$ and $r<s\le N$.

\section{Bloch representation}
\label{app:B}

Let $\{\sigma_k\}_{k=1}^{N^2}$ be a basis for the Hermitian
matrices. With respect to this basis the master
equation~(\ref{eq:LME}) with dissipation term~(\ref{eq:D}) can be
written in coordinate form as a linear matrix differential equation
(DE) $\dot{\vec{r}}=(\Lb+\sum_d \Db^{(d)})\vec{r}$, where
$\vec{r}=(r_n)\in\RR^{N^2}$ with $r_n=\Tr(\rho\sigma_n)$ and $\Lb$
and $\Db^{(d)}$ are $N^2\times N^2$ (real) matrices with entries
\begin{subequations}
\begin{align}
\label{eqn:LD}
  L_{mn}       &= \Tr(i H [\sigma_m,\sigma_n]) \\
  D_{mn}^{(d)} &= \Tr(V_d^\dag \sigma_m V_d \sigma_n)
                  -\frac{1}{2} \Tr(V_d^\dag V_d \{\sigma_m,\sigma_n\})
\end{align}
\end{subequations}
where $\{A,B\}=AB+BA$ is the usual anticommutator.

If we choose the basis such that
$\sigma_{N^2}=\frac{1}{\sqrt{N}}\ONE$ and the remaining basis
elements form a basis for the trace-zero Hermitian matrices, then,
noting that $r_N=\frac{1}{\sqrt{N}}\Tr(\rho)=\frac{1}{\sqrt{N}}$ is
constant and thus $\dot{r}_N=0$, we can define the reduced so-called
Bloch vector $\vec{s}=(r_1,\ldots,r_{N^2-1})^T$, and rewrite the
linear matrix DE for $\vec{r}$ as affine-linear matrix DE
$\dot{\vec{s}}(t) = \A \vec{s}(t) + \vec{c}$ for $\vec{s}$,
where $A$ is an $(N^2-1)\times (N^2-1)$ real matrix with $A_{mn}=
L_{mn}+ \sum_d D_{mn}^{(d)}$ and 
$c_m=\frac{1}{\sqrt{N}}\sum_d D_{mN}^{(d)}=\frac{1}{N}\sum_d\Tr([V_d,V_d^\dag]\sigma_m)$.

\end{document}